\theoremstyle{definition}
\newtheorem{definition}{Definition}[section]
\theoremstyle{plain}
\newtheorem{corollary}{Corollary}[section]
\newtheorem{theorem}{Theorem}[section]
\newtheorem{proposition}[theorem]{Proposition}
\newcommand{\EFP}{{\sc EFprior}\xspace}
\title{Fair Division with Prioritized Agents}
\author[1]{Xiaolin Bu}
\author[2]{Zihao Li}
\author[3]{Shengxin Liu}
\author[4]{Jiaxin Song}
\author[5]{Biaoshuai Tao}
\affil[1]{Shanghai Jiao Tong University, lin\underline{ }bu@sjtu.edu.cn}
\affil[2]{Nanyang Technological University, zihao004@e.ntu.edu.sg}
\affil[3]{Harbin Institute of Technology, Shenzhen, sxliu@hit.edu.cn}
\affil[4]{Shanghai Jiao Tong University, sjtu\underline{ }xiaosong@sjtu.edu.cn}
\affil[5]{Shanghai Jiao Tong University, bstao@sjtu.edu.cn}
\date{}
\begin{document}

\maketitle

\begin{abstract}
We consider the fair division problem of indivisible items. It is well-known that an envy-free allocation may not exist, and a relaxed version of envy-freeness, \emph{envy-freeness up to one item (EF1)}, has been widely considered. In an EF1 allocation, an agent may envy others' allocated shares, but only up to one item. In many applications, we may wish to specify a subset of prioritized agents where strict envy-freeness needs to be guaranteed from these agents to the remaining agents, while ensuring the whole allocation is still EF1. Prioritized agents may be those agents who are envious in a previous EF1 allocation, those agents who belong to underrepresented groups, etc. Motivated by this, we propose a new fairness notion named \emph{envy-freeness with prioritized agents} \EFP, and study the existence and the algorithmic aspects for the problem of computing an \EFP allocation. With additive valuations, the simple round-robin algorithm is able to compute an \EFP allocation. In this paper, we mainly focus on general valuations. In particular, we present a polynomial-time algorithm that outputs an \EFP allocation with most of the items allocated. When all the items need to be allocated, we also present polynomial-time algorithms for some well-motivated special cases.
\end{abstract}

\section{Introduction}
The \emph{fair division} problem studies how to \emph{fairly} allocate a set of resources to a set of agents who have heterogeneous preferences over the resources.
Starting with \cite{Steinhaus1948}, the fair division problem has been receiving significant attention from mathematicians, economists, and computer scientists in the past decades.
Among different fairness interpretations, \emph{envy-freeness}~\citep{Foley67} is the most studied fairness criterion which requires that each agent believes she receives a share that has weakly more value than the share allocated to each of the other agents (i.e., each agent does not envy any other agents).
Classical work in fair division has been focused on resources that are \emph{infinitely divisible}, which is also known as \emph{cake-cutting} problem~\citep{even1984note,brams1995envy,CL10,bei2012optimal,bei2017cake,10.1145/3490486.3538321} . Envy-free allocations always exist in the cake-cutting setting~\citep{brams1995envy} and can be computed via a discrete and bounded protocol~\citep{AzizMa16}.

Recent research focuses more on allocations of indivisible items \citep{Lipton04onapproximately, budish2011combinatorial,ConitzerFrSh17,CaragiannisKuMo19,bu2022complexity,amanatidis2022fair,li2022proportional}.
Obviously, violation of fairness is unavoidable in some scenarios, e.g., when the number of items is less than the number of agents (in which case some agents will receive an empty set).
This necessitates the relaxation of fairness.
To relax envy-freeness, \citet{Lipton04onapproximately} and \citet{budish2011combinatorial} proposed the notion \emph{envy-freeness up to one item (EF1)} which allows an agent $i$ to envy another agent $j$, as long as there exists an item in $j$'s allocated bundle whose (hypothetical) removal eliminates the envy from $i$ to $j$.\footnote{EF1 was implicitly mentioned by \citet{Lipton04onapproximately}, and explicitly formulated by \citet{budish2011combinatorial}.}
EF1 has then been one of the most widely-studied fairness notions for indivisible item allocation, and is guaranteed to exist~\citep{Lipton04onapproximately,CaragiannisKuMo19}.

While an allocation that gives advantages to some agents over the others is perhaps justifiable by the inherent unfairness in the allocation of indivisible items, in many applications, it is desirable that a specified set of agents with higher priority are favored.
The examples abound: it is natural to prioritize those agents who are not favored in the past allocations (due to the intrinsic unfairness of item-allocation); for applicants with equal qualifications, job positions are given to those applicants in underrepresented groups first.
These motivate the proposal of new fairness solution concepts that not only mitigates the unfairness but also ensures that those prioritized agents are favored if unfairness is inevitable. 

In the context of envy-freeness, although EF1 mitigates the unfairness by restricting envy to ``up to one item'', it does not consider agents' priorities.
To incorporate this feature, we propose a new fairness notion called \emph{envy-freeness with prioritized agents} (\EFP).
Given a set of prioritized agents, an \EFP allocation requires that 1) the allocation as a whole is EF1 and 2) strict envy-freeness from each prioritized agent to each non-prioritized agent is ensured.

As an important remark, by introducing prioritized agents, our solution concept \EFP does not \emph{create} unfairness.
Instead, we are seeking for allocations that prioritize a prescribed set of agents \emph{subject to that} unfairness has been justified and mitigated.
This differentiates our work from those who assign weights to agents based on their importance~\citep{ChakrabortyIgSu21,ChakrabortySeSu22}.
Our solution concept applies to the scenarios where fairness is of paramount importance and the ``tie-breaking'' matters if absolute fairness fails.

\paragraph{Applications of fairness with prioritized agents.}
Our model aligns with all the fair division applications where fairness is the primary desideratum, and a secondary desideratum is used to break ties.
This secondary desideratum can be arbitrarily specified as needed.
For example, in a typical course registration mechanism of a university, students submit rankings to the available courses reflecting their preferences, and the system allocates courses based on their (heterogeneous) preferences and course vacancies.
The course allocation is primarily based on students' preferences and will allocate a course to a student only when all the students ranking this course higher than her are registered.
This minimizes the envy among the students.
However, when the number of vacancies cannot sustain all the students who rank this course high, priority is given to senior students among those students submitting the same ranking, in order to ensure these senior students' timely graduation.
Here, the secondary desideratum takes the academic year into account.
As another example, when a set of entitled benefits cannot be evenly distributed among the employees in a company, slight advantages are normally given to those older employees for their longer time of service, those who have larger family expense (e.g., having more children, being under medical treatment), or other desirable tie-breaking factors. 

Another potential application of our model is the fair division with underrepresented agents.
A vast among of organizations value DEI (diversity, equity, and inclusion) and offer corresponding training program to avoid discrimination to those underrepresented groups.
However, the mere presence of these ``diversity structures'' may fail to serve their purpose without a concrete fairness measurement~\citep{kaiser2013presumed}.
On the other hand, this type of measurements needs to be carefully made to avoid \emph{reverse discrimination}: inappropriate  policies may cause those ``over-represented groups'' feel they have been discriminated~\citep{fish1993reverse,newkirk2017myth}.
Our \EFP notion gives a concrete measurement and provides advantages to the underrepresented groups to an extent that is also acceptable to other groups in the sense that an overall fairness criterion EF1 is still guaranteed.
The practice of prioritizing underrepresented groups within the range of fairness has already been adopted widely.
For example, under the Equality Act 2010 in the United Kingdom, the membership in a protected and disadvantaged group is allowed to be considered in hiring and promotion, if the candidates are of equal merit. In this case, the membership in an underrepresented group is used as a ``tie-breaker''.

\paragraph{On computing \EFP allocations.}
Unfortunately, most EF1 algorithms do not have control over which agents are favored in the output EF1 allocation.
\citet{Lipton04onapproximately} proposed an algorithm, \emph{envy-graph procedure}, that computes an EF1 allocation with polynomial time.
The algorithm starts by assigning each agent the empty bundle, and adds an item to an agent's bundle in each iteration.
Specifically, an \emph{envy-graph} (see Sect.~\ref{sect:prelim-cycle} for details) is constructed where vertices represent agents and a directed edge $(i,j)$ represents $i$ envies $j$ in the current allocation.
To maintain EF1 property throughout the process, the algorithm always chooses a source vertex in the graph and adds an item to the bundle of the agent corresponding to this vertex.
Whenever there is a cycle in the graph, a \emph{cycle-rotation} step is performed where each agent's bundle is replaced by the bundle of the next agent in the cycle; this guarantees the existence of the source vertices at the beginning of each iteration.
In this algorithm, a source in the envy-graph is not envious and is thus favored, but the ``cycle-rotation'' step in the algorithm changes the set of source vertices unpredictably.

\citet{CaragiannisKuMo19} showed that the allocation with the maximum Nash social welfare (NSW) is always EF1.
Thus, a natural algorithm for computing an EF1 allocation is to find a NSW maximizing allocation, which is adopted by the fair division website \emph{spliddit.org}~\citep{goldman2015spliddit,shah2017spliddit}.
However, the uniqueness of the NSW maximizing allocations makes it impossible to prioritize a fixed set of agents.

Although a simple round-robin algorithm~\citep{CaragiannisKuMo19} can output an EF1 allocation with specified agents prioritized, it only works if agents' valuations are additive (see Sect.~\ref{sect:additive} for details).
However, in general settings, the existence and the computation of an \EFP allocation remain to be open problems, which is the main concern of this paper.

\subsection{Our Results}
As our main contribution, we propose a new fairness notion \EFP that is stronger than EF1 by additionally enforcing strict envy-freeness from a prescribed set of prioritized agents to the remaining agents.
We then study the existence and the algorithmic aspects of \EFP.
We note that \EFP always exists for agents with additive valuations, and can be computed by a round-robin algorithm (Sect.~\ref{sect:additive}).

With general valuations, we present a polynomial-time algorithm that outputs a partial \EFP allocation where the set of unallocated items has a small bounded size and small bounded values to all agents (Theorem~\ref{thm:efppar}).
Our techniques are built upon the algorithm for computing an \emph{envy-freeness up to any item (EFX)} allocation proposed by~\citet{charity_soda}.
Other than some additional effort being made to maintaining the strict envy-freeness from the prioritized agents to the remaining agents, our algorithm makes use of a novel approach to exchange the agents' bundles with the pool of unallocated items.
This new approach makes our algorithm run in polynomial time, whereas Chaudhury et al.'s algorithm is only known to run in pseudo-polynomial time.

We also study two special cases: when all the prioritized agents have the same valuations, and when all the non-prioritized agents have the same valuations (Sect.~\ref{sect:specialgeneral}).
Under both cases, we show the existence of \EFP by presenting polynomial time algorithms.
The positive results on these two special cases imply the tractability of \EFP in the following three scenarios: 1) when there is only one prioritized agent, 2) when there is only one non-prioritized agent, and 3) when the total number of agents is at most 3.
The algorithms in Sect.~\ref{sect:specialgeneral} provide some basic ideas that are used in the algorithm for our main technical result (mentioned in the previous paragraph) in Sect.~\ref{sect:general}.

We conclude our paper by proposing the open problem about the existence and the computational complexity of a (complete) \EFP allocation in the general setting.

\subsection{Related Work}
Besides envy-freeness and EF1 as introduced before, another important envy-based fairness notion is called \emph{envy-freeness up to any item (EFX)}~\citep{CaragiannisKuMo19}.
In an EFX allocation, each agent $i$ may envy agent $j$ but the envy can be eliminated by removing an \emph{arbitrary} item from agent $j$'s bundle.
Clearly, EFX is a stronger fairness notion than EF1.
The existence of EFX allocations is largely open. 
We only know that EFX allocations exist in some special cases, e.g., two agents with general valuations~\citep{PlautRo20} and three agents with additive valuations~\citep{ChaudhuryGaMe20}.
Other notable indivisible fairness notions include \emph{proportionality up to one item (PROP1)}~\citep{ConitzerFrSh17}, \emph{maximin share (MMS)} fairness~\citep{budish2011combinatorial}, and so on. 
These notions are also extended to incorporate agents' weights or entitlements such as weighted EF1~\citep{ChakrabortyIgSu21}, weighted PROP1~\citep{AzizMoSa20} and weighted MMS~\citep{FarhadiGhHa19}. 

Previous work also considers partial allocations. \citet{CaragiannisGrHu19} show that there exist partial EFX allocations that have high Nash social welfares. \citet{charity_soda} show that partial EFX allocations exist if the number of unallocated goods is at most $n-1$. Furthermore, \citet{ChaudhuryGaMe21} prove that there always exists a $(1-\epsilon)$-EFX allocation with $64(n/\epsilon)^{4/5}$ unallocated goods and high Nash welfare. The bound of the size of unallocated goods has been respectively improved to 
$O\left(n^{0.67}\right)$ for any $\epsilon \in (0, \frac12]$ by \cite{berendsohn2022fixed} and $O\left((n/\epsilon)^{2/3}\right)$ by \cite{akrami2022efx}.  For four agents, \citet{BergerCoFe22} give a method that computes an EFX allocation while leaving at most one item unallocated.

Another related fairness notion of \EFP is called {\em local envy-freeness} (proposed by \citep{Beynier19}), in which each agent is only required to not envy her neighbors on an underlying social network. 
However, two agents connected by an edge are treated symmetrically with neither of them being prioritized.
On the other hand, strict envy-freeness is imposed from each prioritized agent to each non-prioritized agent in our notion \EFP.

\section{Preliminaries}
\label{sect:prelim}
A set $M$ of $m$ indivisible items is allocated to a set $N$ of $n$ agents.
Each agent $i$ has a \emph{valuation function} $v_i:\{0,1\}^m\to\mathbb{R}_{\geq0}$ that specifies a non-negative value to a bundle/set of items.\footnote{We will use the words ``bundle'' and ``set'' interchangeably.}
The valuation function $v_i$ is assumed to be
\begin{itemize}
    \item \emph{normalized}: $v_i(\emptyset)=0$; 
    \item \emph{monotone}: $v_i(S)\geq v_i(T)$ for any $T\subseteq S\subseteq M$.
\end{itemize}

A \emph{(complete) allocation} $(A_1,\ldots,A_n)$ is a partition of $M$, where $A_i$ is the set of items allocated to agent $i$.
A \emph{partial allocation} $(A_1,\ldots,A_n,B)$ is a partition of $M$ into $n+1$ subsets, where the extra subset $B$ is the set of unallocated items.
In this paper, unless specified otherwise, an allocation means a complete allocation.
We will only consider partial allocations in Sect.~\ref{sect:general}.

In a complete or partial allocation, we say that agent $i$ \emph{envies} agent $j$ if $v_i(A_i)<v_i(A_j)$.
That is, according to agent $i$'s utility function, agent $i$ believes her own bundle $A_i$ has less value than agent $j$'s bundle $A_j$.
An allocation is \emph{envy-free} if $i$ does not envy $j$ for any pair of agents $i$ and $j$.
An envy-free allocation may not exist in the problem of allocating indivisible items (e.g., when $m<n$). 
A well-known common relaxation of envy-freeness, \emph{envy-freeness up to one item} (EF1), is defined below.

\begin{definition}\label{def:EF1}
An allocation $(A_1, A_2, \dots, A_n)$ is said to satisfy \emph{envy-freeness up to one item (EF1)}, if for any two agents $i$ and $j$, there exists an item $g \in A_{j}$ such that $v_{i}(A_{i})\geq v_{i}(A_{j}\setminus\{g\})$.
\end{definition}
EF1 on partial allocations can be defined analogously.

For a verbal description, in an EF1 allocation, after removing some item $g$ from agent $j$'s bundle, agent $i$ will no longer envy agent $j$.
Given an allocation $(A_1,\ldots,A_n)$, we say that agent $i$ \emph{strongly envies} agent $j$ if $v_i(A_i)<v_i(A_j\setminus\{g\})$ for every $g\in A_j$.
By our definition, an allocation is EF1 if and only if $i$ does not strongly envy $j$ for every pair $(i,j)$ of agents.

As it is well-known that EF1 allocations always exist~\cite{Lipton04onapproximately}, strong envy can always be eliminated.
However, as mentioned before, envy may be unavoidable, and in this case we would like to give priority to a specified subset $P$ of agents such that each agent in $P$ does not envy each agent in $Q:= N\setminus P$.

\begin{definition}\label{def:EFP}
An allocation $(A_1,\ldots,A_n)$ is \emph{envy-free with respect to the set of prioritized agents $P$} if 1) the allocation is EF1, and 2) $i$ does not envy $j$ for any $i\in P$ and $j\in Q=N\setminus P$.
We say ``\EFP with respect to $P$'' to refer to this, or simply \EFP when the context is clear.
\end{definition}
This definition also applies to partial allocations.

\subsection{Additive Valuations}
\label{sect:additive}
To warm up, we first consider a common important special case of general valuations called \emph{additive valuations}: a valuation function $v_i$ is \emph{additive} if
$v_i(S)=\sum_{g\in S}v_i(\{g\})$.

In this case, an \EFP allocation for any subset $P\subseteq N$ can be simply achieved via the \emph{round-robin} algorithm with a specific agent order. 
In the round-robin algorithm, agents first queue up, then every agent from the top of the queue to the bottom takes away her favorite item among the remaining items and adds it to her bundle in turn.
EF1 is maintained during the round-robin algorithm.
In every round, an agent $i$ will not envy an agent $j_1$ if $j_1$ is behind $i$ in the queue.
For $j_2$ who is ahead of $i$, $i$ will not envy the item that $j_2$ receives in the next round.
Thus, when all the items have been allocated, after removing the item that $j_2$ receives in the first round, $i$ will not envy $j_2$.

In our algorithm, the agents in $P$ are ordered before the agents in $Q$.
It is easy to see that each agent $i \in P$ would pick an item before agent $j \in Q$ in every single round. Thus, $i$ will not envy $j$ when all the items have been allocated.\par 


Unfortunately, the above round-robin method cannot be extended to the case of general valuations, even for submodular valuations. The counterexample is shown as follows.
\begin{itemize}
    \item $P=\{p\}$ and $Q=\{q\}$;
    \item $M=\{1,2,3,4\}$;
    \item For agent $p$, $v_p(\{1\})=4$ and $v_p(\{2\})=v_p(\{3\})=v_p(\{4\})=3$. For the $2$-size set, $v_p(\{1,3\})=5$ and $v_p(\{1,2\})=v_p(\{1,4\})=4$, while $v_p(\{2,4\})=6$. The remaining valuations in $v_p$ can be chosen arbitrarily satisfying the submodular condition;
    \item For agent $q$, $v_q$ is additive where $v_q(\{1\})=v_q(\{3\})=0$ and $v_q(\{2\})=v_q(\{4\})=1$;
\end{itemize}

In the above example, all valuations are submodular.
In the round-robin algorithm, agent $p$ receives item $1$ and $3$ while agent $q$ receives item $2$ and $4$, where $p$ envies $q$ and violates the definition of \EFP.

\subsection{Envy-Graph}
\label{sect:prelim-cycle}
\cite{Lipton04onapproximately} first proposed the tool of \emph{envy-graph} for finding an EF1 allocation on general valuations. In an envy-graph, each vertex corresponds to an agent, and each directed edge $(u, v)$ represents that agent $u$ envies agent $v$. 
The \emph{envy-graph procedure} to find an EF1 allocation works as follows: it starts with an empty allocation and adjusts the allocation so that the envy-graph is a directed acyclic graph (DAG) before allocating the next item. When there is still an unallocated item, 
\begin{itemize}
    \item choose an arbitrary source agent (Definition~\ref{def:source}) of the envy-graph (so no one envies her), and allocate an item to her;
    \item reconstruct the graph according to the new allocation;
    \item If there exists a cycle in the envy-graph, we run cycle-elimination algorithm (defined in Definition~\ref{def:cycleElimination}). 
    The value for each agent is non-decreasing throughout the process after which the envy-graph contains no cycle.
\end{itemize}

It is easy to verify that the (partial) allocation is EF1 throughout the entire procedure.
In particular, adding an item $g$ to a source agent $i$ does not destroy the EF1 property, as an agent will not envy $i$ if $g$ is removed from $i$'s bundle.
The cycle-elimination step does not destroy the EF1 property either: this step does not change the constituents of each bundle, and each agent receives a bundle with a weakly larger value.

\begin{definition}[Cycle-Elimination]\label{def:cycleElimination}
For a cycle $u_0\rightarrow\dots\rightarrow u_{k-1}\rightarrow u_0$ on the envy-graph, each agent $u_i$ receives the bundle from $u_{i+1}$ where $i\in \{0, 1, \dots, k-1\}$ (indices are modulo $k$).
This is done iteratively until the envy-graph contains no cycle.
\end{definition}

We will also use the above cycle-elimination algorithm as a subroutine multiple times.
The cycle-elimination requires less than $n^2$ iterations, since each iteration removes at least one edge (the edges in the cycle are removed) and there are less than $n^2$ edges.
Thus, the envy-graph algorithm always terminates since the number of unallocated items is reduced by one after each iteration.

Lastly, we define a few notions that are used.

\begin{definition}\label{def:source}
An agent is called a \emph{source agent} if her corresponding vertex is a source in the envy-graph.
\end{definition}

\begin{definition}\label{def:Psource}
An agent $i$ is called a \emph{$P$-source agent} if she is a source agent in the subgraph induced by $P$. Moreover, $i$ is a \emph{$P$-source agent of agent $j$} if she is a $P$-source agent and $j$ is reachable from $i$ in the subgraph.
\end{definition}

\begin{definition}\label{def:Qsource}
An agent $i$ is called a \emph{$Q$-source agent} if she is a source agent in the envy-graph and $i\in Q$. Moreover, $i$ is a \emph{$Q$-source agent of agent $j$} if she is a $Q$-source agent and $j$ is reachable from $i$ in the envy-graph.
\end{definition}

In the envy-graph corresponding to a (partial) \EFP allocation, there is no edge from a vertex in $P$ to a vertex in $Q$.
Therefore, a $P$-source agent may not be a source agent, while a $Q$-source agent, by definition, is a source agent.
In addition, if $i$ is a $P$-source agent of $j$, we must have $j\in P$; if $i$ is a $Q$-source agent of $j$, then $j$ may be in $P$ or in $Q$.


\section{Identical Valuations of $P$ or $Q$}
\label{sect:specialgeneral}
This section studies two special cases of general valuations: when agents in $P$ have the same valuation, and when agents in $Q$ have the same valuation.
We design a polynomial-time algorithm that computes an \EFP allocation for each of the two cases.
The two algorithms in this section provide some basic ideas for our algorithm in Sect.~\ref{sect:general} where we consider general valuations.
In the next section, we will see how to extend these ideas to general valuations and the limitations of them.

Our results for the two special cases are also interesting on their own.
In the applications where $P$ is the underrepresented agents or $Q$ is the over-represented agents, it is natural to assume people in the same group (underrepresented or over-represented) share a similar valuation.
In addition, our results immediately apply to the three natural settings: when $P=1$, when $Q=1$, and when there are $3$ agents.


We first consider the case where all agents in $P$ have identical valuations by using Algorithm \ref{alg:identicalp}.
Lines \ref{algstep:idenpcase1b}-\ref{algstep:idenpcase1e} consider the simple case where an item can be directly assigned to one source agent in $P$, while Lines \ref{algstep:idenpcase2b}-\ref{algstep:idenpcase2e} try to assign an item to a suitable $Q$-source agent and keep no envy from $P$ to $Q$ by eliminating one envy cycle.
Lines \ref{algstep:idenpupb}-\ref{algstep:idenpupe} update the item set $B$ and the envy-graph after adding one item.

\begin{algorithm}[htbp]
\caption{Algorithm for computing \EFP allocation satisfying Theorem \ref{thm:identicalp} }\label{alg:identicalp}
\KwOutput{an \EFP allocation.}
Let $A_i = \emptyset$, for any $i\in N$, and $B = M$\;
\While{there exists an item $g\in B$}
{
    \If{there exists one source agent $i\in P$ \label{algstep:idenpcase1b}}
    {
     $A_i\leftarrow A_i\cup \{g\}$\; \label{algstep:idenpcase1e}
    }
    \Else 
    {
    Let $i$ be one $P$-source agent\;\label{algstep:idenpcase2b}
    Let $j$ be one $Q$-source agent of agent $i$\;\label{algstep:idenpcase2c}
    $A_j\leftarrow A_j\cup \{g\}$\;
    \If {agent $i$ envies $j$}
    {
        Add the edge $(i,j)$ to the envy-graph\;
        Eliminate the cycle $j\rightarrow\cdots\rightarrow i\rightarrow j$ \hspace{0.5cm}// there is a path from $j$ to $i$ by Definition~\ref{def:Qsource} and Line~\ref{algstep:idenpcase2c}\;\label{algstep:idenpcase2e}
    }
    }
    $B\leftarrow B\setminus \{g\}$\; \label{algstep:idenpupb}
    Reconstruct the envy-graph\;
    Run cycle-elimination algorithm (Definition~\ref{def:cycleElimination})\;\label{algstep:idenpupe}
}
\Return{$A$}
\end{algorithm}

\begin{theorem}
An \EFP allocation always exists and can be found in polynomial time when all agents in $P$ have identical valuations.
\label{thm:identicalp}
\end{theorem}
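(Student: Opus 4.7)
The plan is to prove by induction on iterations that Algorithm~\ref{alg:identicalp} maintains three invariants at the end of every iteration: (I) the current partial allocation is EF1, (II) no agent in $P$ envies any agent in $Q$, and (III) the envy-graph is a DAG. Since $|B|$ strictly decreases each iteration and every step of the loop is clearly polynomial, termination and the running-time claim follow once the invariants are established. Let $v$ denote the common valuation shared by the agents in $P$; then invariant (II) is equivalent to $\min_{k\in P} v(A_k) \ge \max_{\ell\in Q} v(A_\ell)$, and a $P$-source is precisely a $P$-agent whose bundle has the smallest $v$-value in $P$. This reformulation is what lets the identical-valuation assumption be exploited.

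The crucial structural fact, needed only in the \texttt{else} branch, is that the path $j \to q_1 \to \cdots \to q_s \to i$ produced by the algorithm contains no $P$-agent other than $i$. Indeed, invariant (II) implies there is no $P\to Q$ edge, so once a path enters $P$ it stays in $P$; but $i$ is a $P$-source, so no $P$-agent can have an edge to $i$, forcing every $q_\ell$ to lie in $Q$. This observation is the main obstacle to overcome: without it, the cycle-rotation $j \to q_1 \to \cdots \to q_s \to i \to j$ could move some $P$-bundle to a $Q$-agent, which might have $v$-value exceeding some other $P$-agent's bundle and thereby break (II).

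Given this fact, the inductive step is a case analysis. In Case 1, giving $g$ to a source $i\in P$ preserves EF1 (removing $g$ restores the old source's bundle, which no one envied), and preserves (II) because only a $P$-bundle grew while $Q$-bundles are untouched. In Case 2, adding $g$ to the $Q$-source $j$ preserves EF1 since $j$ was a source and removing $g$ recovers $A_j^{\text{old}}$. If $i$ does not envy $j$ afterwards, then $v(A_i) \ge v(A_j \cup \{g\})$ combined with $i$ minimizing $v$ over $P$ shows that no $P$-agent envies $j$. If $i$ does envy $j$, the dedicated cycle-rotation gives $i$ the bundle $A_j^{\text{old}} \cup \{g\}$, so $v(A_i^{\text{new}}) > v(A_i^{\text{old}}) = \min_{P} v(\cdot)$; meanwhile every $Q$-agent on the cycle ends up either with an old $Q$-bundle (of $v$-value $\le \max_Q v \le v(A_i^{\text{old}})$) or with $A_i^{\text{old}}$ itself, so the new $\max_Q v$ is still at most $v(A_i^{\text{old}}) \le \min_P v^{\text{new}}$, giving (II). Finally, the generic cycle-elimination at the end of each iteration only rotates cycles lying entirely within $P$ or entirely within $Q$ (once more because no $P\to Q$ edge exists); such rotations permute bundles within one class, leave $\min_P v$ and $\max_Q v$ unchanged, and preserve EF1 by the standard argument of~\cite{Lipton04onapproximately}, so all three invariants survive the iteration and the induction closes.
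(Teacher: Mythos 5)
Your proof is correct and follows essentially the same route as the paper's: induction on iterations maintaining EF1, no $P$-to-$Q$ envy, and acyclicity, with the same case split on whether a source lies in $P$ and the same use of identical valuations via the $P$-source's minimality. Your explicit structural fact that the rotated cycle $j\to q_1\to\cdots\to q_s\to i\to j$ contains no $P$-agent other than $i$ is a welcome piece of justification that the paper's proof leaves implicit in its claim that only $i$'s bundle migrates from $P$ to $Q$.
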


\begin{proof}
The cycle-elimination step at Line~\ref{algstep:idenpupe} always makes the envy-graph a DAG, which validates Line~\ref{algstep:idenpcase2b} and \ref{algstep:idenpcase2c}.
It can then be easily checked that Algorithm \ref{alg:identicalp} always terminates in polynomial time.
It suffices to show the allocation returned by Algorithm \ref{alg:identicalp} is \EFP. Since the empty allocation at the beginning is a partial \EFP allocation, we just need to show the allocation is still a partial \EFP allocation after each iteration of the while-loop.

We first consider the simple case where we can find one source agent $i\in P$ (Lines \ref{algstep:idenpcase1b}-\ref{algstep:idenpcase1e}).
Before allocating item $g$, there is no envy to agent $i$ and no envy from $P$ to $Q$ from the definition of source agent and partial \EFP allocation.
Thus, after allocating $g$ to agent $i$, there is no strong envy to agent $i$ (as the envy is eliminated if $g$ is removed from $i$'s bundle), and there is no envy from $P$ to $Q$ by the monotonicity of valuation functions. This is still a partial \EFP allocation.

We then consider the case where there is no source agent in $P$ (Lines \ref{algstep:idenpcase2b}-\ref{algstep:idenpcase2e}).
We first denote $i$ as one $P$-source agent and $j$ as one $Q$-source agent of $i$. As mentioned, since the envy-graph is a DAG, there must exist such agents $i$ and $j$.
After allocating item $g$ to agent $j$, two cases may happen:
\begin{itemize}
    \item Agent $i$ does not envy $j$. Since all agents in $P$ have identical valuations and $i$ is a $P$-source agent, we have $u_k(A_k)\geq u_i(A_i)\geq u_i(A_j\cup\{g\})=u_k(A_j\cup\{g\})$ for each agent $k\in P$, so there is no envy from $P$ to $Q$. There is no strongly envy to agent $j$ because there is no envy to agent $j$ before allocating item $g$.
    \item Agent $i$ envies $j$. We will eliminate the cycle including $j$ and $i$.  Since $i$ is a $P$-source agent and the only agent in $P$ whose bundle is reallocated to an agent in $Q$ is agent $i$, there is no envy from $P$ to $Q$.
    Because the bundle $A_j$ is held by one source agent before allocating $g$ and each agent in the cycle gets better bundle, there is still no strongly envy. 
\end{itemize}

In both two cases above, the allocation is still a partial \EFP allocation.
From the fact that there is no envy edge from $P$ to $Q$, there is no envy cycle containing agents in $P$ and $Q$ at the same time, so Line \ref{algstep:idenpupb}-\ref{algstep:idenpupe} keep the envy-graph acyclic and maintain \EFP.
\end{proof}

We then consider the case where all agents in $Q$ have identical valuations. The technique is similar to Algorithm~\ref{alg:identicalp}, where the only difference is that when there exists no source agent in $P$, we find and eliminate one envy cycle more carefully.

\begin{theorem}
An \EFP allocation always exists and can be found in polynomial time when all agents in $Q$ have identical valuations.
\label{thm:identicalq}
\end{theorem}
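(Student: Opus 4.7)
My plan is to adapt Algorithm~\ref{alg:identicalp} by keeping the outer structure intact and refining only the cycle-elimination step when no source in $P$ exists. The case where some $i \in P$ is a source of the envy-graph is handled exactly as before: assigning $g$ to $i$ preserves EF1 (since nobody envied $i$) and introduces no $P\to Q$ envy (the only bundle that grew belongs to a $P$-agent). The work is in the ``no source in $P$'' branch.

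In that branch, I would start by picking any $P$-source $i$ and a $Q$-source $j$ with a path to $i$ in the envy-graph. Such a $j$ must exist: $i$ is envied by someone (otherwise $i$ is a source), that someone must lie in $Q$ since $i$ is $P$-source, and the $Q$-subgraph is a DAG, so tracing backward reaches a $Q$-source. By EFprior there are no $P\to Q$ envy edges, so once a path enters $P$ it stays in $P$; combined with $i$ being $P$-source (so the last edge into $i$ comes from $Q$), the entire $j\to\cdots\to i$ path lies in $Q$ except for the endpoint $i$. I would then tentatively assign $g$ to $j$. If no $P$-agent envies $j$ afterwards, EFprior trivially holds and the iteration ends. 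Otherwise, let $p\in P$ envy $j$; because the $P$-subgraph is a DAG, there exists a $P$-source $i'$ (possibly $i' = i$, possibly not) that reaches $p$ within the $P$-subgraph.

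The careful cycle to rotate is $j \to \cdots \to i' \to \cdots \to p \to j$: the first segment follows an envy-graph path from $j$ to $i'$ (all in $Q$ except the endpoint $i'$, by the same structural argument), the middle segment follows a $P$-subgraph path from $i'$ to $p$ (all in $P$), and the closing edge is the newly created envy edge $p \to j$. Rotation moves $i'$'s old bundle to the last $Q$-agent on the cycle immediately before $i'$; since $i'$ is a $P$-source, no $P$-agent envied $i'$, so no $P$-agent envies this $Q$-agent's new bundle. All other $P$-bundles on the cycle move within $P$, and all other $Q$-bundles move within $Q$; the one remaining cross-class move is $A_j\cup\{g\}$ going to $p$, which is a $Q$-to-$P$ transfer and therefore cannot create any $P\to Q$ envy. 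EF1 is preserved by the usual cycle-elimination argument: each cycle member receives a weakly preferred bundle, and any agent who might be thought to strongly envy the new $A_j\cup\{g\}$ can remove $g$ to recover the pre-addition comparison, which was already EF-free since $j$ had been a source.

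The main obstacle I expect is coordinating the choice of $j$ with the eventual $i'$: the initially chosen $j$ (a $Q$-source of our initial $i$) may not reach the $i'$ we need after seeing $p$. My intended fix is to not commit to $(i,j)$ before checking, and instead iterate over $(i',j')$ pairs of $P$-sources and their $Q$-sources: for each such pair, hypothetically add $g$ to $j'$, and look for an envying $p$ that is reachable from $i'$ in the $P$-subgraph. The role of the identical-$Q$ hypothesis will be exactly at this point, since it forces the envy ordering among $Q$-agents to be determined by a single total comparison under $v_Q$, so that the $Q$-source reachable to each $P$-source has a clean structural description and we can argue that some valid $(i',j',p)$ triple must exist whenever a $P$-agent becomes envious. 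Once EFprior is preserved in every iteration, an inductive argument gives the EFprior property of the returned complete allocation, and polynomial running time is immediate since every step (finding sources, paths, rotations) is polynomial and one item is removed from $B$ per iteration.
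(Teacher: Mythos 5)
Your outer structure, your analysis of the rotation (the only $P\to Q$ bundle transfer lands on the $Q$-agent preceding a $P$-source, whose bundle no $P$-agent envied; the only $Q\to P$ transfer is $A_j\cup\{g\}$ going to $p$; EF1 is recovered by deleting $g$ since $j$ was a source) all match the paper's proof of Theorem~\ref{thm:identicalq}. But the proof has a genuine gap exactly at the point you flag as ``the main obstacle'': you never establish that a closable cycle exists, i.e.\ that some $Q$-source $j'$ that becomes envied by a $P$-agent $p$ after receiving $g$ actually reaches a $P$-source $i'$ of $p$ in the envy-graph. Your proposed fix --- iterate over pairs $(i',j')$ and claim that the identical-$Q$ hypothesis ``forces'' a valid triple to exist --- is an assertion, not an argument; without it the construction can fail in exactly the way the paper's Section~4 counterexample for general valuations illustrates (disjoint $Q\to P$ edges with no connecting structure). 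This is precisely the step where the identical-valuation hypothesis must be deployed, and you do not deploy it.

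The gap is fillable, and in two different ways. The paper's way: allocate $g$ to a global source $i\in Q$, take $k$ a $P$-source of the envier $j$ and $l$ a $Q$-source of $k$, and close the cycle with an \emph{artificial} edge $i\to l$, justified by $u_i(A_l)=u_l(A_l)\ge u_l(A_i)=u_i(A_i)$ (the inequality holds because $l$, sharing $i$'s valuation, does not envy the source $i$); so $i$ ends weakly better off than her pre-$g$ bundle even though she may not genuinely envy $l$. Your way can be made to work with one lemma you omitted: when no agent of $P$ is a source of the whole envy-graph, \emph{every} $Q$-source directly envies \emph{every} $P$-source. (Proof: a $P$-source $k$ is envied only by some $c\in Q$, so $v_Q(A_c)<v_Q(A_k)$; a $Q$-source $l$ is unenvied, so $v_Q(A_l)\le v_Q(A_c)<v_Q(A_k)$.) With that lemma, any $Q$-source $j'$ reaches any $P$-source $i'$ in one genuine step and your cycle $j'\to i'\to\cdots\to p\to j'$ always exists, so the pair-iteration is not even needed. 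Either repair must be stated and proved; as written, the proof does not go through.
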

\begin{proof}
\begin{algorithm}[htbp]
\caption{Algorithm for computing \EFP allocation satisfying Theorem~\ref{thm:identicalq} }\label{alg:identicalq}
\KwOutput{an \EFP with respect to $P$ allocation.}
Let $A_i = \emptyset$, for any $i\in N$, and $B = M$\;
\While{there exists an item $g\in B$}
{
    \If{there exists one source agent $i\in P$ \label{algstep:idenqcase1b}}
    {
     $A_i\leftarrow A_i\cup \{g\}$\;\label{algstep:idenqcase1e}
    }
    \Else
    {
    Let $i$ be one source agent\;\label{algstep:idenqcase2b}
    $A_i\leftarrow A_i\cup \{g\}$\;
    \If{there exists $j\in P$ who envies $i$}
    {
    Add the edge $(j,i)$ to the envy-graph\;
    Let $k$ be one $P$-source agent of $j$ and $l$ be one $Q$-source agent of $k$\;
    Eliminate the cycle $i\rightarrow l\rightarrow\cdots\rightarrow k \rightarrow \cdots \rightarrow j \rightarrow i$, where the edge $i\rightarrow l$ is an added edge if $i\neq l$\;\label{algstep:idenqcase2e}
    }
    }
    $B\leftarrow B\setminus \{g\}$\;\label{algstep:idenqupb}
    Reconstruct the envy-graph\;
    Eliminate the cycles (Definition~\ref{def:cycleElimination})\;\label{algstep:idenqupe}
}
\Return{$A$}
\end{algorithm}

The algorithm is given in Algorithm~\ref{alg:identicalq}.
The technique is similar to Algorithm~\ref{alg:identicalp}, where the only difference is that when there exists no source agent in $P$, we need to find and eliminate one envy cycle more carefully, which is implemented at Lines \ref{algstep:idenqcase2b}-\ref{algstep:idenqcase2e}.

With the similar analysis as in the proof of Theorem \ref{thm:identicalp}, it suffices to show the allocation can keep partial \EFP with respect to $P$ and the corresponding envy-graph can keep acyclic after adding each item $g$.

Since the use of Lines \ref{algstep:idenqcase1b}-\ref{algstep:idenqcase1e} and \ref{algstep:idenqupb}-\ref{algstep:idenqupe} is the same as the corresponding part in the proof of Theorem \ref{thm:identicalp}, the remaining is to show after implementing Lines \ref{algstep:idenqcase2b}-\ref{algstep:idenqcase2e}, there is no strongly envy between each pair of agents and no envy from $P$ to $Q$.

Since there is no source agent in $P$ and the envy-graph is a DAG, agent $i$ exists and is in $Q$. If there is no agent in $P$ who envies the bundle $A_i\cup\{g\}$, because $i$ is a source agent before, no one strongly envies $i$ and there is still no envy from $P$ to $Q$ after allocating $g$.

We then consider the second case where there exists one agent $j\in P$ who envies agent $i$.
Because the envy-graph is a DAG and no source agent is in $P$, agent $k$ and $l$ exist.
Since all agents in $Q$ have the same valuations and agent $i$ and $l$ are both $Q$-source agents, $u_i(A_l)=u_l(A_l)\geq u_l(A_i)=u_i(A_i)$.
After eliminating the cycle in Line \ref{algstep:idenqcase2e}, each agent gets a weakly better bundle.
With the fact that the bundle $A_i$ is held by a source agent before adding item $g$, there is no strong envy towards the bundle $A_i\cup\{g\}$.
Since only agent $k$'s bundle in $P$ is held by one agent in $Q$ and $k$ is a $P$-source agent, there is no envy from $P$ to $Q$.
\end{proof}

From the above two theorems, we can easily get the tractability of the following three natural special cases.

\begin{corollary}
An \EFP allocation always exists and can be found in polynomial time in any of the following settings: when $|P|=1$, when $|Q|=1$, and when $|N|\le3$.
\end{corollary}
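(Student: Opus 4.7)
The plan is to derive this corollary directly from Theorems~\ref{thm:identicalp} and~\ref{thm:identicalq}, observing that each of the three listed settings reduces to one of those theorems' hypotheses (or to the easier task of computing an EF1 allocation).

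First I would handle the settings $|P|=1$ and $|Q|=1$. When $|P|=1$, the condition ``all agents in $P$ have identical valuations'' is trivially satisfied, since there is only one such agent, so Algorithm~\ref{alg:identicalp} from Theorem~\ref{thm:identicalp} immediately outputs an \EFP allocation in polynomial time. The case $|Q|=1$ is symmetric and reduces to Theorem~\ref{thm:identicalq} via Algorithm~\ref{alg:identicalq}.

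For the case $|N|\le 3$, I would first dispose of the degenerate subcases $P=\emptyset$ or $Q=\emptyset$: in either case, Condition~2 of Definition~\ref{def:EFP} holds vacuously, so \EFP coincides with EF1, and the envy-graph procedure of~\citet{Lipton04onapproximately} produces the desired allocation in polynomial time. In the remaining subcases we have $|P|\ge 1$ and $|Q|\ge 1$ with $|P|+|Q|\le 3$, so at least one of $|P|=1$ or $|Q|=1$ must hold, and the conclusion follows from the first part of the argument.

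I do not anticipate any technical obstacle: the corollary is essentially a bookkeeping consequence of the two theorems of this section, combined with the trivial observation that a singleton set of agents automatically satisfies the identical-valuations condition.
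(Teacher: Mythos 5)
Your proposal is correct and matches the paper's intended argument: the corollary is stated as an immediate consequence of Theorems~\ref{thm:identicalp} and~\ref{thm:identicalq}, exactly via the observation that a singleton $P$ or $Q$ trivially satisfies the identical-valuations hypothesis and that $|N|\le 3$ forces one of these cases (or a degenerate one). Your explicit treatment of the $P=\emptyset$ or $Q=\emptyset$ subcases is a small bit of extra care the paper leaves implicit, but the route is the same.
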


\section{General Valuations}
\label{sect:general}
In this section, we consider general valuations with no constraint.
The main challenge of applying Algorithm~\ref{alg:identicalp} to the setting here is that vertices in $P$ may not be well-connected.
Consider the following scenario.
We have a total of four agents $1,2,3,4$ where $P=\{1,2\}$ and $Q=\{3,4\}$.
After a certain iteration, we have a partial allocation where the envy-graph only have two edges $(3,1)$ and $(4,2)$.
Moreover, for each remaining unallocated item $g$, it satisfies that 1) adding $g$ to $1$ or $2$ introduces strong envy, 2) adding $g$ to $3$ makes $2$ envy $3$, and 3) adding $g$ to $4$ makes $1$ envy $4$.
Then, the algorithm cannot continue with existing techniques in the previous section.
In particular, no cycle appears if we add $g$ to agent $3$ or $4$, and the partial allocation is no longer \EFP.
In the previous setting, the good connections between agents in $P$ ensures that we can always make a cycle appear by carefully selecting an agent to whom an item is added.

Nevertheless, we are able to obtain a slightly weaker result for general valuations.
We prove that such a partial \EFP allocation always exists: the number of unallocated items is less than the size of both $P$ and $Q$, and no one envies the unallocated bundle $B$.
We will call the set of the unallocated items the \emph{pool}.

\begin{theorem}\label{thm:efppar}
For any $P\subseteq N$, a partial \EFP allocation (with respect to $P$) that satisfies the following properties always exists.
\begin{itemize}
\label{thm:partialefp}
    \item $|B| < \min(|P|, |Q|)$, and
    \item $v_i(B) \le v_i(A_i)$ for all $i \in N$.
\end{itemize}
In addition, there is a polynomial-time algorithm that computes such an allocation.
\end{theorem}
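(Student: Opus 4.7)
The plan is to design an iterative algorithm that maintains a partial \EFP allocation $(A_1,\ldots,A_n,B)$ together with the auxiliary invariant $v_i(B)\le v_i(A_i)$ for every agent $i$, and halts as soon as $|B|<\min(|P|,|Q|)$. The construction builds on the EFX algorithm of~\citet{charity_soda}, adapted in two ways: it enforces the asymmetric priority from $P$ to $Q$ at every step, and it replaces the envy-graph path exchanges of~\citet{charity_soda} with a direct bundle-to-pool swap so that the running time becomes polynomial rather than pseudo-polynomial.

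Each iteration of the main loop operates in one of two modes. In the \emph{advance} mode I try to transfer an item $g\in B$ into some agent's bundle using the ideas of Section~\ref{sect:specialgeneral}: if there is a source agent in $P$, assign $g$ to her; otherwise pick a $P$-source agent $i$ together with a $Q$-source agent $j$ of $i$ (both exist because the envy-graph is a DAG), assign $g$ to $j$, and eliminate the cycle $j\to\cdots\to i\to j$ if the new edge $i\to j$ is created. The analysis behind Theorem~\ref{thm:identicalp} shows that whenever advance succeeds, both partial \EFP and the value invariant are preserved. When advance is blocked (as in the two-isolated-edges configuration at the start of Section~\ref{sect:general}), I switch to the \emph{swap} mode: following the most-envious-agent idea, I locate an inclusion-minimal subset $S\subseteq A_\ell\cup B$ and a champion agent $k$ that strongly envies $S$ relative to $A_k$, then give $S$ to $k$, release $A_k$ into the pool, and let $(A_\ell\cup B)\setminus S$ become the new pool. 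Performing this swap directly against the pool instead of propagating along an envy-graph path is what breaks the pseudo-polynomial dependence of~\citet{charity_soda}.

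The hard part will be showing that swap preserves the no-envy condition from $P$ to $Q$. If the champion $k$ lies in $Q$, enlarging $A_k$ can introduce envy from some $P$-agent toward $k$; if $k\in P$ and her old bundle is absorbed into the pool, some other $P$-agent may now envy a $Q$-agent whose relative value has risen. I plan to handle both cases by a priority rule on the champion selection (preferring $P$-champions to $Q$-champions, and among $Q$-champions those that no $P$-agent would envy), combined with a cleanup step that runs cycle-elimination on the subgraph induced by $P\cup\{k\}$ to reroute any newly created $P$-to-$Q$ envy into $P$ itself. For termination I would define a lexicographic potential on the sorted vector $(v_1(A_1),\ldots,v_n(A_n))$ and argue that every swap strictly increases it: the champion's new bundle is strictly preferred to every bundle she has ever held, so the number of swaps per agent is bounded polynomially in $n$ and $m$, and advance strictly reduces $|B|$.

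Finally, the stopping bound $|B|<\min(|P|,|Q|)$ will come from a pigeonhole argument. The invariant $v_i(B)\le v_i(A_i)$ means no agent strongly envies the pool, so whenever the envy-graph has a source agent in $P$ (respectively, a $Q$-source reachable from every $P$-source) and $|B|$ is at least $|P|$ (respectively, $|Q|$), an item can be safely transferred in the advance mode without violating partial \EFP or the value invariant. Contrapositively, if advance has permanently failed then $|B|$ must already be below both thresholds, which is exactly the termination condition claimed. The invariant $v_i(B)\le v_i(A_i)$ itself is maintained inductively: advance only shrinks $B$ in every agent's view, and each swap, by the minimality of $S$ and the champion's strong envy, produces a new pool whose value to any agent is dominated by her current bundle value.
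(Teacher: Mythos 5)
Your high-level architecture matches the paper's: an envy-graph-based loop that maintains a partial \EFP allocation plus the invariant that nobody envies the pool, with a special bundle-for-pool exchange when the pool becomes envied. But there are three concrete gaps. First, your ``advance'' mode is exactly the mechanism the paper's own counterexample at the start of Sect.~\ref{sect:general} shows to be insufficient: with $P=\{1,2\}$, $Q=\{3,4\}$ and envy edges $(3,1)$, $(4,2)$ only, giving $g$ to the $Q$-source $j$ of a $P$-source $i$ may create envy from a \emph{different} agent $p'\in P$ that is not on any cycle through $j$, so there is nothing to eliminate and \EFP is lost. The missing idea is the paper's rule $U_2$: when every $Q$-source would be envied, you add items to a \emph{chain} of $Q$-source agents $q_0,q_1,\ldots$ (each obtained by tracing the newly envious $p'_{i+1}$ back to a $P$-source $p_{i+1}$ and then to a $Q$-source $q_{i+1}$), inserting one envy edge per step until a cycle is forced, and only then rotate bundles. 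The bound $|B|<\min(|P|,|Q|)$ is not a pigeonhole afterthought but falls out of this construction: each segment of the resulting cycle consumes one pool item and contains at least one agent of each of $P$ and $Q$, so at most $\min(|P|,|Q|)$ items are needed before a cycle appears, which is precisely why the rule is applicable whenever $|B|\ge\min(|P|,|Q|)$.

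Second, your swap is not well specified (what does agent $\ell$ end up with?) and, as described, the new pool $A_k\cup\bigl((A_\ell\cup B)\setminus S\bigr)$ has no reason to be unenvied, so the invariant $v_i(B)\le v_i(A_i)$ is not maintained. The paper's $U_3$ takes $S$ as a minimal envied subset of $B$ alone and, crucially, makes the \emph{new pool equal to the old bundle of a $Q$-source agent}, which nobody envied; since valuations never decrease afterwards, the pool is never envied again and the pool-growing case fires at most once. Third, your termination argument (each swap strictly increases the champion's value, hence polynomially many swaps) is exactly the pseudo-polynomial argument of \citet{charity_soda} that you claim to avoid: the number of distinct bundle values per agent can be exponential in $m$. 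The paper's polynomial bound instead counts iterations directly -- every rule except the one-shot pool-growing case of $U_3$ strictly shrinks $|B|$, so the main loop runs at most $2m$ times.
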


The algorithm shares some similarities with the one proposed by~\citet{charity_soda}.
However, there are substantial differences in the analysis of the algorithm as the objective is changed from EFX to \EFP.
Moreover, our update rule $U_3$ that exchanges an agent's bundle with a set of unallocated items is more technically involved compared to its counterpart in Chaudhury et al.'s algorithm.
This additional techniques also make our algorithm run in polynomial time, whereas Chaudhury et al.'s algorithm runs in pseudo-polynomial time.

\subsection{The Main Algorithm}
The main algorithm is shown in Algorithm~\ref{alg:partialalg}. 
Each iteration of Algorithm~\ref{alg:partialalg} applies one of the four update rules defined in Algorithms~\ref{alg:u0_1},~\ref{alg:u2} and~\ref{alg:u3}.
After that, the envy-graph is reconstructed and cycles in the graph are eliminated.
We will prove that the \EFP property is secured after applying any of the four rules, and we will also prove that the cycle-elimination step does not destroy the \EFP property.

Next, we will give more detailed descriptions of the four update rules.
\begin{algorithm}[htbp]
\caption{Computing a partial \EFP allocation}\label{alg:partialalg}
\KwOutput{a partial \EFP allocation satisfying Theorem~\ref{thm:partialefp}.}
Let $A_i = \emptyset$, for any $i\in N$, and $B = M$\;
\While{there exists an applicable rule $U_\ell$\label{algline:general_while}}%
{
    $A, B \leftarrow U_\ell(A, B)$\;
    Reconstruct the envy-graph\;
    Run cycle-elimination algorithm (Definition~\ref{def:cycleElimination})\;\label{algline:cycle-elimination}
}
\Return{the partial \EFP allocation $A$}
\end{algorithm}
\begin{algorithm}[htbp]
\caption{The Update Rules $U_0$ and $U_1$.}\label{alg:u0_1}
\SetKwProg{Fn}{Function}{:}{}
\Fn{$U_0$(allocation A, pool B)}{
\textbf{Precondition:} There exist an item $g\in B$ and one source agent $i\in P$\;
Allocate $g$ to $i$: $A_i \leftarrow A_i \cup \{g\}$\;
Update pool: $B\leftarrow B\setminus \{g\}$\;
}
\SetKwProg{Fn}{Function}{:}{}
\Fn{$U_1$(allocation A, pool B)}{
\textbf{Precondition:} There exist an item $g\in B$ and one source agent $i\in Q$ such that allocating $g$ to $i$ would not cause any agent in $P$ to envy $i$\;
Allocate $g$ to $i$: $A_i \leftarrow A_i \cup \{g\}$\;
Update pool: $B\leftarrow B\setminus \{g\}$\;
}
\end{algorithm}

\paragraph{Rule $U_0$:} As shown in Algorithm~\ref{alg:u0_1}, $U_0$ just allocates an unallocated item to a source agent in $P$. Hence, it would not lead to any new envy edges from $P$ to $Q$. 

\paragraph{Rule $U_1$:} When there is no source agent in $P$, $U_1$ (shown in algorithm~\ref{alg:u0_1}) will allocate an item to a harmless agent ($i.e.,$ no agents in $P$ will envy her after allocating this item) in $Q$. 

\paragraph*{Rule $U_2$:} 
$U_2$ works in the scenarios where neither $U_0$'s precondition nor $U_1$'s is satisfied and the size of pool items is at least $\min(|P|, |Q|)$. Hence, there are no source agents in $P$, and, for any agent in $Q$, allocating an unallocated item to her will cause an agent in $P$ to envy her. 

In this case, we first choose an arbitrary $P$-source agent $p_0$, and then find one $Q$-source agent $q_0$ of $p_0$ (This is always possible as the precondition of $U_0$ fails). 
Then, the while-loop runs several iterations described as follows, until the envy-graph has a cycle.
We choose an arbitrary unallocated item and add it to $q_0$'s bundle.
Since the precondition of $U_1$ fails, there exists an agent $p_1'\in P$ that envies $q_0$'s bundle.
We add the edge $(p_1',q_0)$ corresponding to this envy relation (we do not update the entire envy-graph at this moment).
Let $p_1\in P$ be one $P$-source of $p_1'$, and let $q_1\in Q$ be one $Q$-source of $p_1$ (again, the existence of $q_1\in Q$ is due to the failure of the precondition of $U_0$).
We add another unallocated item to $q_1$'s bundle, and let $p_2'\in P$ be one agent that envies $q_1$'s bundle (the existence of $p_2'\in P$ is based on the failure of the precondition of $U_1$).
We keep carrying on this process until a cycle is found.
The cycle will be a union of multiple segments ``$\rightarrow q_i\rightarrow\cdots \rightarrow p_i\rightarrow\cdots\rightarrow p_i'\rightarrow$''.
Notice that each segment contains at least one agent in $P$ and at least one agent in $Q$.
The precondition of $U_2$ ensures the unallocated items will not run out before we find a cycle (formally proved in Proposition~\ref{prop:u2_while_termination}).



After the while-loop, we reallocate the bundles along this cycle. 
Figure~\ref{fig:u2_eg} provides an illustration of the update rule $U_2$.

Notice that we do not update the entire envy-graph within each iteration of the while-loop. Instead, we only add one edge $(p_{i+1}',q_i)$.
If the entire envy-graph were updated, there may be more cycles within agents in $P$ or agents in $Q$.
However, these cycles are handled at Line~\ref{algline:cycle-elimination} of Algorithm~\ref{alg:partialalg} after we have completed the update $U_2$. 

\paragraph*{Rule $U_3$:}
$U_3$ is mainly used to handle the case when neither $U_0$'s precondition nor $U_1$'s is satisfied and there exists an agent who envies the pool.

As shown in Algorithm~\ref{alg:u3}, we first find a subset $S\subseteq B$ such that there exists an agent $s$ that envies $S$, yet for all $S'\subsetneq S$, no agent envies $S'$. 
If $s\in P$, then let $p$ be one $P$-source agent of $s$ and $q_0$ be one $Q$-source agent of $p$. If $s\in Q$, then let $q_0$ be one $Q$-source agent of $s$. 
Next, we try to add an item to $q_0$, and perform a similar sequence of steps as the while-loop in $U_2$.
Two events may happen: a cycle is found, or a cycle is not found after all the unallocated items are allocated.
In the former case, we terminate $U_3$ and jump to Line~\ref{algline:U2_cycle} of Algorithm~\ref{alg:u2} (rule $U_2$).
In the latter case, we reallocate the bundles as follows. Consider the path $u_0 \rightarrow \cdots \rightarrow u_k$ from $q_i$ to $s$. We reallocate their bundles along the path and let each agent take her succeeding vertex's bundle.
For the first agent $u_0=q_i$, her original bundle becomes the pool of unallocated items after receiving her succeeding agent's bundle.
For the last agent $u_k=s$, she takes the bundle $S$.

\begin{figure}[t]
    \centering
    \includegraphics[width=0.6\textwidth]{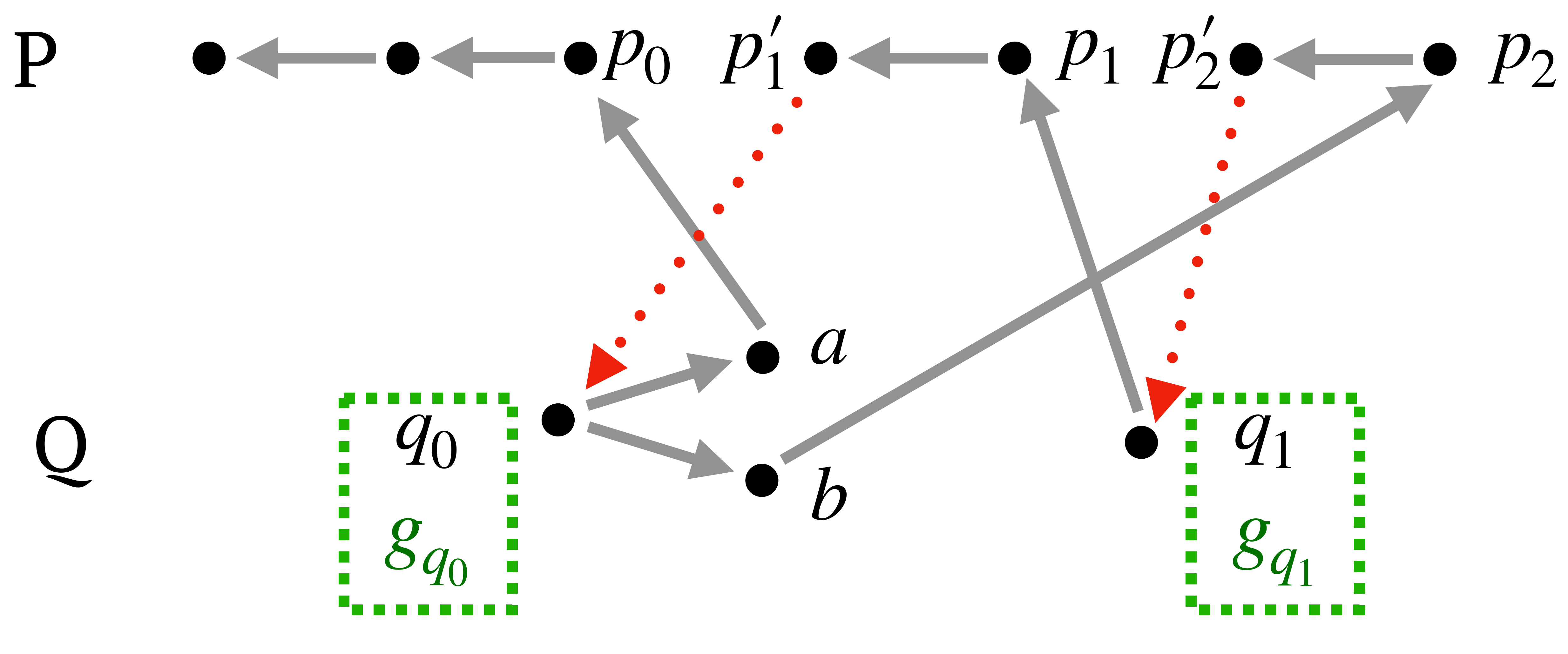}
    \caption{An example envy-graph for rule $U_2$. The solid lines and the dotted lines respectively represent the original edges and newly constructed edges during $U_2$. At the first iteration, item $g_{q_0}$ is allocated to agent $q_0$, and it causes agent $p_1'$ to envy $q_0$. Then we create an edge $(p_1', q_0)$, and let one $P$-source agent of $p_1'$ be $p_1$ and one $Q$-source agent of $p_1$ be $q_1$. Next, this loop terminates when finding one $Q$-source agent of $p_2$ is just the previous $q_0$. Finally, these agents form a cycle $q_0\rightarrow b \rightarrow p_2 \rightarrow p_2' \rightarrow q_1\rightarrow p_1\rightarrow p_1' \rightarrow q_0$. We remark that the cycle starts and ends at $q_0$ in this example, and the cycle may start at a ``middle vertex'' $q_i$ in general.}
    \label{fig:u2_eg}
\end{figure}
\begin{algorithm}[t]
\caption{The Update Rule $U_2$}\label{alg:u2} 
\SetKwProg{Fn}{Function}{:}{}
\Fn{$U_2$(allocation A, pool B)}{
\textbf{Precondition:} The preconditions of both $U_0$ and $U_1$ are not satisfied, and $|B|\ge \min(|P|, |Q|)$\;
Let $p_0\in P$ be an arbitrary $P$-source agent\;
Find one $p_0$'s $Q$-source agent $q_0\in Q$\;
$A' \leftarrow A, i\leftarrow 0$\;
\While{the envy-graph contains no cycle}{
Allocate $g_{q_i}$ to $q_i$: $A_{q_i}' \leftarrow A_{q_i}\cup\{g_{q_i}\}$\;\label{algline:U2_add}
Let $p_{i+1}'\in P$ be one agent who envies $A_{q_i}'$\;
Add the edge $(p_{i+1}', q_i)$ in the envy-graph\;\label{algline:U2_construt}
Let $p_{i+1}$ be one $P$-source agent of $p_{i+1}'$\;\label{algline:U2_p}
Let $q_{i+1}$ be one $Q$-source agent of $p_{i+1}$\;\label{algline:U2_q}
$i\leftarrow i+1$\;
}
Let $u_0\rightarrow\cdots\rightarrow u_{k-1}\rightarrow u_0$ be the cycle consisting of the segments ``$q_i\rightarrow\cdots\rightarrow p_i\rightarrow \cdots \rightarrow p_i'$''\;\label{algline:U2_cycle}
$A_{u_i}\leftarrow A_{u_{i+1}}'$ for each $i$ (indices are modulo $k$)\;\label{algline:U2_reallocate}
Update the pool $B$: $B\leftarrow M\setminus\left(\bigcup_{i=1}^nA_i\right)$\;
}
\end{algorithm}

\subsection{Proof of Theorem~\ref{thm:partialefp}}
In this section, we show that the output allocation of Algorithm~\ref{alg:partialalg} satisfies all the requirements in Theorem~\ref{thm:efppar}.

Below, we prove two properties for the update rule $U_2$.
The first proposition follows straightforwardly from Algorithm~\ref{alg:u2}.
\begin{restatable}{proposition}{utwopsource}\label{prop:u2_p_source}
For an edge $(i,j)$ in the cycle at Line~\ref{algline:U2_cycle} of the update rule $U_2$ (Algorithm~\ref{alg:u2}), the followings are true.
\begin{enumerate}
    \item If $i\in Q$ and $j\in P$, then $j$ is a $P$-source agent.
    \item If $j$ is an agent whose bundle has been updated at Line~\ref{algline:U2_add}, then $i\in P$. 
\end{enumerate}
\end{restatable}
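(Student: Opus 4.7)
The plan is to exploit two structural properties built into the construction at Lines~\ref{algline:U2_add}--\ref{algline:U2_cycle}, combined with one fact about any partial \EFP allocation: its envy-graph contains no edge from $P$ to $Q$. Both claims then reduce to routine vertex-tracing along the cycle's explicit segment decomposition, and I would handle the two parts independently.

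For Part~1, I would take an arbitrary edge $(i,j)$ of the cycle with $i\in Q$ and $j\in P$ and first rule out that it is one of the newly added edges: every edge added inside the while-loop at Line~\ref{algline:U2_construt} has its tail in $P$ and its head in $Q$, so $(i,j)$ must be a pre-existing envy-graph edge. This places $(i,j)$ inside some segment ``$q_t\to\cdots\to p_t\to\cdots\to p_t'$''. Since the partial \EFP envy-graph at the start of $U_2$ has no $P$-to-$Q$ edge, once a directed path enters $P$ it remains in $P$; hence along the sub-path from $q_t$ to $p_t$ there is exactly one $Q$-to-$P$ crossing, and $(i,j)$ must be that crossing. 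From $j$ onward the path stays inside the $P$-induced subgraph and terminates at $p_t$; but $p_t$ was selected at Line~\ref{algline:U2_p} as a $P$-source agent, so it has no incoming edge in the $P$-induced subgraph. This forces $j=p_t$, and $p_t$ being a $P$-source agent is exactly what Part~1 asserts.

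For Part~2, I would start from the observation that an agent whose bundle is updated at Line~\ref{algline:U2_add} must be $q_t$ for some iteration $t$. Because the cycle selected at Line~\ref{algline:U2_cycle} is simple and its segmentation follows the iterations of the while-loop, $q_t$ appears only as the starting vertex of its own segment $\sigma_t=q_t\to\cdots\to p_t\to\cdots\to p_t'$. Consequently the unique incoming edge to $q_t$ in the cycle is the connector linking the preceding segment's endpoint $p_{t+1}'$ to the start of $\sigma_t$, which is precisely the newly added edge $(p_{t+1}',q_t)$ of Line~\ref{algline:U2_construt}. Since $p_{t+1}'\in P$ by construction, this yields $i=p_{t+1}'\in P$.

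I expect the only delicate point to be the simple-cycle assertion used in Part~2---namely, that no $q_t$ also occurs in the interior of a neighbouring segment---but this is really baked into Line~\ref{algline:U2_cycle}: the algorithm selects a simple cycle whose segment decomposition follows the iterations of the while-loop, so each participating $q_t$ appears exactly once in the cycle, and exactly as the start of its own segment. The rest is bookkeeping rather than any substantive combinatorial argument.
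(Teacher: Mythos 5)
Your proof is correct and follows essentially the same route as the paper's: both parts are read off from the segment structure of the cycle at Line~\ref{algline:U2_cycle}, using that each $p_t$ is chosen as a $P$-source agent and each $q_t$ as a $Q$-source agent, together with the absence of $P$-to-$Q$ edges in a partial \EFP envy-graph. The one ``delicate point'' you flag in Part~2 is closed most cleanly the way the paper does it---$q_t$ is a $Q$-source agent and hence a source of the envy-graph, so the only edges into it are those added at Line~\ref{algline:U2_construt}, whose tails lie in $P$---which makes the simple-cycle bookkeeping unnecessary.
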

\begin{proof}
Both statements hold straightforwardly from the update rule.
For each $p_i'$, we always find one $P$-source agent $p_i$ before back-tracking one source $q_i$ (Line~\ref{algline:U2_p}, \ref{algline:U2_q}), so an edge from $Q$ to $P$ must end at a $P$-source, which proves 1.
For 2, since we only update some $Q$-source agents' bundle at Line~\ref{algline:U2_add}, $j$ must be a $Q$-source agent. In addition, the edges pointing to $Q$-source agents are only constructed at Line~\ref{algline:U2_construt}, so $i$ must belong to $P$.
\end{proof}

The second proposition justifies the validity of the update rule $U_2$. 
\begin{proposition}\label{prop:u2_while_termination}
The while-loop in the update rule $U_2$ (Algorithm~\ref{alg:u2}) will terminate before the unallocated items running out.
\end{proposition}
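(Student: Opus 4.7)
The plan is to bound the number of iterations of the while-loop by $\min(|P|,|Q|)$, so that by the precondition $|B|\ge\min(|P|,|Q|)$ the pool is not exhausted before termination. Suppose the loop has executed exactly $T$ iterations. The first bound, $T \le |Q|$, follows from the claim that $q_0, q_1, \ldots, q_{T-1}$ are pairwise distinct. In iteration $i$, once the algorithm adds the edge $(p_{i+1}', q_i)$ at Line~\ref{algline:U2_construt}, agent $q_i$ acquires an incoming edge and hence is no longer a source of the augmented envy-graph; since $U_2$ never deletes edges, $q_i$ remains a non-source for the rest of the loop. Each subsequent $q_j$ is chosen as a $Q$-source at Line~\ref{algline:U2_q}, so it cannot coincide with any earlier $q_i$.

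The second bound, $T \le |P|$, follows from the claim that $p_0, p_1, \ldots, p_{T-1}$ are pairwise distinct, where $p_0$ is the initial $P$-source picked before the loop. The key structural observation is that, because iteration $T-1$ was entered without a cycle, at the start of iteration $T-1$ the augmented envy-graph contains the directed path
\[
q_{T-1}\to\cdots\to p_{T-1}\to\cdots\to p_{T-1}'\to q_{T-2}\to\cdots\to p_1\to\cdots\to p_1'\to q_0\to\cdots\to p_0,
\]
obtained by concatenating, for each $i$, the envy-graph sub-path $q_i\to\cdots\to p_i$ (existing because $q_i$ is a $Q$-source of $p_i$), the $P$-subgraph sub-path $p_i\to\cdots\to p_i'$ (existing because $p_i$ is a $P$-source of $p_i'$), and the added edge $p_i'\to q_{i-1}$, together with the original envy-graph sub-path from $q_0$ to $p_0$. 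If any two of $p_0, p_1, \ldots, p_{T-1}$ coincided, this composite path would revisit a vertex and close a directed cycle in the augmented envy-graph, contradicting the absence of a cycle at the start of iteration $T-1$. Hence these $T$ $P$-source agents are distinct, giving $T \le |P|$.

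Combining the two bounds yields $T\le\min(|P|,|Q|)\le|B|$, and since each iteration allocates exactly one item from $B$ the pool is never exhausted before the loop terminates. The main technical care lies in the $|P|$-bound: one must justify that the composite walk above is actually realized as a single directed path in the current augmented envy-graph (so the original sub-paths, the $P$-subgraph sub-paths, and the added edges compose without conflict and the original $q_0$-to-$p_0$ sub-path is preserved by $U_2$, which only adds edges pointing from $P$ to $Q$), so that any vertex revisit among the $p_i$'s does produce a genuine directed cycle.
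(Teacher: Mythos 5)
Your proof is correct and follows the same counting strategy as the paper's: each iteration of the while-loop consumes one pool item together with a fresh agent from $P$ and a fresh agent from $Q$, so the loop runs at most $\min(|P|,|Q|)\le|B|$ times. If anything, your write-up is more careful than the paper's, which phrases the count in terms of the segments of the final cycle and leaves implicit exactly the two facts you prove explicitly --- that the $q_i$'s are pairwise distinct (no longer sources once they acquire an incoming edge) and that the $p_i$'s are pairwise distinct (a repeat would close a directed cycle along the composite path, contradicting the loop guard).
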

\begin{proof}
We have seen that a cycle at Line~\ref{algline:U2_cycle} consists of multiple segments. For each segment, it is composed of three parts:  $q_i\rightarrow \cdots \rightarrow p_i$, $p_i\rightarrow \cdots\rightarrow p_i'$ and $p_i'\rightarrow q_{i-1}$ ($q_{i-1}$ is not included). Since each segment contains at least one agent belonging to $P$ and one agent belonging to $Q$, the number of segments is at most $\min(|P|, |Q|)$. In addition, since the first agent of each segment will receive exactly one new item from $B$, the number of segments is at most $\min(|P|, |Q|)$, which is no more than the number of unallocated items (guaranteed by the precondition of $U_2$). Therefore, the proposition concludes.
\end{proof}

\begin{algorithm}[t]
\caption{The Update Rule $U_3$.}\label{alg:u3}
\SetKwProg{Fn}{Function}{:}{}
\Fn{$U_3$(allocation A, pool B)}{
\textbf{Precondition:} The preconditions of both $U_0$ and $U_1$ are not satisfied, and there exists one agent that envies the unallocated bundle $B$\;
$B'\leftarrow B, S\leftarrow \emptyset, i\leftarrow 0, A'\leftarrow A$\;
\While{no agent envies $S$}{
Add an item $g\in B'$ to $S$\;
$B'\leftarrow B\setminus\{g\}$\;
}
\If{there exists $s\in P$ that envies $S$}{
Let $p$ be one $P$-source agent of $s$\;
Let $q_0$ be one $Q$-source agent of $p$\;
}\Else{
Let $s$ be one agent in $Q$ that envies $S$\;
Let $q_0$ be one $Q$-source agent of $s$\;
}
\While{$B'\neq\emptyset$}{
Allocate $g_{q_i}$ to $q_i$: $A_{q_i}' \leftarrow A_{q_i}\cup\{g_{q_i}\}$\;
Let $p_{i+1}'\in P$ be one agent who envies $A_{q_i}'$\;
Add the edge $(p_{i+1}', q_i)$ in the envy-graph\;
Let $p_{i+1}$ be one $P$-source agent of $p_{i+1}'$\;
Let $q_{i+1}$ be one $Q$-source agent of $p_{i+1}$\;
$i\leftarrow i+1$\;
\If{the envy-graph contains a cycle}{
Terminate $U_3$ and apply $U_2$ from Line~\ref{algline:U2_cycle}\;\label{algline:U3cycle}
}
}
Let $u_0\rightarrow\cdots\rightarrow u_k$ be the path from $q_{i}$ to $s$\;
Update pool: $B\leftarrow A_{q_{i}}$\;
$A_{u_i}\leftarrow A_{u_{i+1}}'$ for $i\in [1,k-1]$, $A_{u_k}\leftarrow S$\;
}
\end{algorithm}

The following proposition shows that the \EFP property is preserved after applying any of the four update rules.
\begin{proposition}\label{prop:four_rules}
For a partial \EFP allocation, the allocation is still \EFP after applying one iteration for any of $U_0$, $U_1$, $U_2$, and $U_3$.
\end{proposition}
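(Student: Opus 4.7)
The plan is to verify, rule by rule, that the postcondition ``EF1 plus no envy from $P$ to $Q$'' is maintained. Rules $U_0$ and $U_1$ admit a direct check. In $U_0$ the recipient $i$ is a source in $P$, so before the update no agent envied $A_i$; after adding the single item $g$, any new envy toward $i$ is killed by removing $g$ (so EF1 holds), and no $P$-to-$Q$ edge can be created because the only modified bundle belongs to a $P$-agent. $U_1$ is analogous, with the no-$P$-to-$Q$ requirement handed to us by the precondition and EF1 given by the fact that $i$ was a source.

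For $U_2$, I would analyze the cyclic shift $A_{u_i} \gets A'_{u_{i+1}}$ in two parts. For EF1, each reassigned bundle $A'_{u_{i+1}}$ is either unchanged from the prior \EFP allocation (hence not strongly envied there) or equals $A_{q_j} \cup \{g_{q_j}\}$ where $A_{q_j}$ was the bundle of a $Q$-source (hence not envied at all); removing $g_{q_j}$ then kills any freshly introduced envy. For the no-$P$-to-$Q$ condition, Proposition~\ref{prop:u2_p_source}(2) guarantees that every updated bundle lands in $P$-hands after the shift, and Proposition~\ref{prop:u2_p_source}(1) guarantees that whenever a $Q$-agent $u_i$ inherits a bundle $A_{u_{i+1}}$ from a $P$-agent, $u_{i+1}$ is a $P$-source and hence $A_{u_{i+1}}$ was not envied by any $P$-agent to begin with. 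Together these imply no $P$-agent envies any $Q$-agent in the new allocation.

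For $U_3$ the cycle-found subcase hands control to the $U_2$ analysis by design, so the work is the no-cycle subcase in which bundles shift along the path $u_0 \to \cdots \to u_k = s$, the pool absorbs the old $A_{q_i}$, and $s$ takes $S$. The path can be viewed as a one-ended analogue of the $U_2$ cycle, and the along-the-path verification of EF1 and no $P$-to-$Q$ envy reuses exactly the same Proposition~\ref{prop:u2_p_source}-based reasoning. The new ingredient is the endpoint: $S$ was chosen as an inclusion-wise minimal envied subset of $B$, so for every $S' \subsetneq S$ no agent envies $S'$; removing any single item from $S$ yields such an $S'$, giving EF1 for $s$'s new bundle. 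If $s \in Q$, the algorithm took the else-branch precisely because no $P$-agent envied $S$, so assigning $S$ to $s$ introduces no $P$-to-$Q$ envy; if $s \in P$, the assignment is internal to $P$ and trivially safe.

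The main obstacle will be the $U_3$ no-cycle subcase, because a path reallocation has two free ends: on one side a new pool is formed from $A_{q_i}$, and on the other side the distinguished bundle $S$ enters play. One must simultaneously verify that every $Q$-source bundle augmented along the way ends up in $P$-hands after the shift, that the released bundle $A_{q_i}$ becomes the new pool without disturbing \EFP, and that $S$ integrates cleanly at the endpoint $s$. Proposition~\ref{prop:u2_p_source} together with the minimality of $S$ are the two tools I expect to carry all three pieces through.
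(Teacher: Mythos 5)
Your proposal is correct and follows essentially the same route as the paper's proof: a rule-by-rule check, with the $U_2$ (and $U_3$ path) analysis resting on the two parts of Proposition~\ref{prop:u2_p_source} (updated bundles end up with $P$-agents; bundles inherited by $Q$-agents come from $P$-sources or from unaugmented $Q$-agents), the cycle subcase of $U_3$ deferring to $U_2$, and the endpoint of $U_3$ handled via the inclusion-wise minimality of $S$ and the if/else branch on whether the envier $s$ lies in $P$ or $Q$. No gaps; your treatment of EF1 under $U_2$ is even slightly more explicit than the paper's, which simply appeals to the standard envy-graph argument.
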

\begin{proof}
For the first three rules, the allocation remains EF1 as the envy-graph procedure claims (see Sect.~\ref{sect:prelim-cycle}). Hence, we only need to prove that, for every agent in $p\in P$ and $q\in Q$, agent $p$ does not envy agent $q$. When rules $U_0$ and $U_1$ are applicable, it is easily checked by their preconditions.

We then analyze the rule $U_2$. 
Let $U = \{u_0, \ldots, u_m \}$ be the set of vertices in the cycle at Line~\ref{algline:U2_cycle} of Algorithm~\ref{alg:u2}. 
We discuss the following two cases of $q$. Note that in both cases, the value of agent $p$'s bundle would not decrease.
\begin{itemize}    \item $q\notin U$ : In this case, agent $q$ will still receive her old bundle. Hence, $p$ still will not envy $q$.
    \item $q\in U$: Agent $q$ would take the bundle from her adjacent agent (denote this agent as $r$) at Line~\ref{algline:U2_reallocate}. We consider two sub-cases: $r\in P$ and $r\in Q$.
    If $r\in P$, according to the first part of Proposition~\ref{prop:u2_p_source}, $r$ must be a $P$-source agent. In this case, no agent in $P$ envies $r$ before the reallocation (Line~\ref{algline:U2_reallocate}), so no agent in $P$ envies $q$ after the reallocation. 
    If $r\in Q$, according to the second part of Proposition~\ref{prop:u2_p_source}, $r$ cannot be an agent whose bundle has been updated at Line~\ref{algline:U2_add} as $q\in Q$. Since $p$ does not envy $r$ before applying $U_2$, $p$ will not envy $q$ after applying rule $U_2$. Thus, $p$ will not envy $q$ in both cases.
\end{itemize}

Now we come to rule $U_3$. 
Rule $U_3$ consists of two cases: the envy-graph forms a cycle at Line~\ref{algline:U3cycle}, or no cycle is formed and the unallocated items run out. 
In the first case, the bundle $S$ will get back to the pool, and the correctness is the same as rule $U_2$.
In the second case, the allocation is EF1 because all agents do not strongly envy $S$ (by our construction of $S$ with iterative addition of one item) and other items are added to the source agents such that each source agent receives at most one extra item. 
Next, we prove that any agent $p\in P$ will not envy agent $q\in Q$. 
It is worth noting that all agents' valuations to their own bundle will not decrease.
We consider the following three cases:
\begin{itemize}
    \item If $q$ is not on the path, her bundle stays the same and $p$'s valuation will not decrease. Hence, $p$ will not envy $q$.
    \item If $q=s$ where $s\in Q$, $p$ does not envy $q$. Otherwise, she will take the bundle $S$ according to Algorithm~\ref{alg:u3}.
    \item If $q\neq s$ is an agent on the path, during the reallocation process, she will receive a bundle from her adjacent agent, that is, an agent in $Q$ or a $P$-source agent. $p$ does not envy these two kinds of agents before the reallocation process and, after reallocation, her valuation to her bundle will not decrease, so she will not envy $q$.
\end{itemize}
Overall, the allocation is still \EFP after applying any of $U_0, U_1, U_2$ and $U_3$.
\end{proof}

The following proposition shows that the cycle-elimination at Line~\ref{algline:cycle-elimination} of our main algorithm (Algorithm~\ref{alg:partialalg}) does not destroy the \EFP property.
The proposition follows from that each cycle cannot contain vertices from both $P$ and $Q$ (since there is no edge from $P$ to $Q$).

\begin{restatable}{proposition}{cycleelimination}
\label{cycleelimination}
For a partial \EFP allocation, running cycle-elimination on the envy-graph does not violate the \EFP property.
\end{restatable}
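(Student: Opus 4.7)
The plan is to exploit the structural property highlighted in the hint: because a partial \EFP allocation contains no envy edge from $P$ to $Q$, any directed cycle in the envy-graph must be contained entirely in $P$ or entirely in $Q$. This will allow me to split into two cases and verify, in each, both (i) the EF1 condition on the whole allocation and (ii) the strict non-envy from $P$ to $Q$.

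First I would fix one application of cycle-elimination along a cycle $u_0 \to u_1 \to \cdots \to u_{k-1} \to u_0$, and observe that $u_i$ envies $u_{i+1}$, so after swapping $u_i$ gets a strictly more valuable bundle under $v_{u_i}$. Standard envy-graph reasoning (recalled in Section~\ref{sect:prelim-cycle}) then shows EF1 is preserved: no agent's own valuation of her bundle decreases, so any prior ``up-to-one-item'' guarantee remains intact, while the bundles themselves are merely permuted, not altered in content. The main new obligation is to verify the $P$-to-$Q$ non-envy clause of \EFP.

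Next I would split on the location of the cycle. If the cycle lies entirely in $P$: no $Q$-agent's bundle changes, and every $P$-agent's valuation of her own bundle weakly increases (strictly for those in the cycle, unchanged for those outside). So for any $p\in P$ and $q\in Q$, we have $v_p(A_p^{\text{new}}) \ge v_p(A_p^{\text{old}}) \ge v_p(A_q^{\text{old}}) = v_p(A_q^{\text{new}})$, preserving non-envy from $p$ to $q$. If the cycle lies entirely in $Q$: no $P$-agent is in the cycle, so every $p\in P$ keeps her bundle, and after reallocation each $q\in Q$ holds some old bundle $A_{q'}^{\text{old}}$ with $q'\in Q$ (where $q'=q$ if $q$ is not in the cycle). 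Since the previous partial \EFP allocation satisfied $v_p(A_p^{\text{old}}) \ge v_p(A_{q'}^{\text{old}})$ for every $q'\in Q$, the same inequality transfers to the new $Q$-bundles, as required.

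Finally, since cycle-elimination (Definition~\ref{def:cycleElimination}) repeats the single swap until the envy-graph is acyclic and each individual swap preserves both EF1 and the $P$-to-$Q$ non-envy (by the argument above), the full procedure preserves \EFP by induction on the number of iterations. The only conceptual obstacle is recognizing the no-$P$-to-$Q$-edge invariant of \EFP allocations, which immediately forces every cycle to be monochromatic in the $P$/$Q$ partition; once this is in hand, the rest of the argument is routine.
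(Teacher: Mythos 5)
Your proof is correct and follows essentially the same route as the paper's: both rest on the observation that the absence of $P$-to-$Q$ envy edges forces every cycle to lie entirely within $P$ or entirely within $Q$, so cycle-elimination is a permutation of bundles inside one side that weakly increases everyone's valuation, preserving both EF1 and the non-envy from $P$ to $Q$. Your write-up is simply a more explicit, case-by-case version of the paper's two-sentence argument.
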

\begin{proof}
Since there is no edge from $P$ to $Q$, the possible cycles in the envy-graph will involve vertices only in $P$ or only in $Q$. 
Since the cycle-elimination operation is a permutation of the previous allocation inside $P$ or $Q$ and everyone on the cycle gets a new bundle with higher valuation, the allocation is EF1 and no new edge occurs from $P$ to $Q$.
\end{proof}

Next, we show that the allocation output by our main algorithm is \EFP by induction. 
In the base case, no item is allocated to any agent, and it satisfies \EFP. 

Assume that after iteration $i$, the allocation satisfies \EFP and contains at least one $P$-source agent or $Q$-source agent (the cycle-elimination step at Line~\ref{algline:cycle-elimination} of Algorithm~\ref{alg:partialalg} ensures the existence of at least one source). In the following iteration, according to Proposition~\ref{prop:four_rules} and Proposition~\ref{cycleelimination}, \EFP will also not be destroyed after updating the allocation. Therefore, the inductive step concludes.

After showing that the output allocation is \EFP, the preconditions of rule $U_2$ and $U_3$, we conclude that the two requirements in Theorem~\ref{thm:partialefp} holds.
It remains to analyze the algorithm's time complexity.

The time complexity of Algorithm~\ref{alg:partialalg} is given below.
\begin{theorem}\label{thm:timecomplexitypartial}
The time complexity of Algorithm~\ref{alg:partialalg} is $O(n^2m\cdot\max(n^2,m))$.
\end{theorem}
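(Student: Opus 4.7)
The plan is to upper-bound separately (a) the number of iterations of the outer \texttt{while} loop in Algorithm~\ref{alg:partialalg} and (b) the per-iteration work, and then multiply. I expect these to be $O(\max(n^2,m))$ and $O(n^2m)$ respectively, giving the claimed bound.

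For (b), I would check each cost center of a single iteration. Rebuilding the envy-graph takes $O(n^2)$ pairwise valuation comparisons, each on bundles of size at most $m$, for $O(n^2m)$. The cycle-elimination step (Line~\ref{algline:cycle-elimination}) performs at most $O(n^2)$ rotations, each reassigning bundles in $O(n)$ while reusing cached comparisons, so it is dominated by the graph rebuild. Within any update rule, the search for $P$- and $Q$-source agents reduces to reachability in the envy-graph in $O(n^2)$; the inner \texttt{while} loops of $U_2$ and $U_3$ iterate at most $\min(|P|,|Q|)\le n$ and $|B|\le m$ times respectively, each step adding one envy edge and locating new sources in $O(n^2)$. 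Summing yields a per-outer-iteration bound of $O(n^2m)$.

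For (a), I would classify iterations by which rule fires. Applications of $U_0$ and $U_1$ each strictly decrease $|B|$; $U_2$ and the cycle branch of $U_3$ (Line~\ref{algline:U3cycle}) also permanently remove at least one item from the pool, because the cycle reallocation in $U_2$ keeps every item $g_{q_j}$ added during its inner loop inside some agent's new bundle. Hence these four cases collectively account for $O(m)$ iterations up to the total amount that $|B|$ ever grows.

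The main obstacle is bounding the number of no-cycle executions of $U_3$, which are the only iterations that can grow $|B|$ by returning the old bundle $A_{q_i}$ to the pool. A value-based potential like $\sum_i v_i(A_i)$ is strictly increased by each such event (since $v_s(S)>v_s(A_s)$ by the construction of $S$), but this only yields termination, not a polynomial bound. I would instead exploit two structural features: the returned bundle came from a $Q$-source agent, and $s$'s new bundle strictly dominates her previous one. Charging each no-cycle $U_3$ event to the ordered pair $(s,q_i)$ and arguing that such a pair cannot recur without an intervening $U_0$, $U_1$, $U_2$, or cycle-branch $U_3$ altering the envy-graph structure that enabled it, I would obtain an $O(n^2)$ bound on these events. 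Combined with the previous paragraph, this yields $O(\max(n^2,m))$ outer iterations; multiplying by the per-iteration cost $O(n^2m)$ gives the stated $O(n^2m\cdot\max(n^2,m))$ running time.
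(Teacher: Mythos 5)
Your decomposition (outer iterations times per-iteration work) and your per-iteration accounting are consistent with the paper's, and your bounds for $U_0$, $U_1$, $U_2$, and the cycle branch of $U_3$ are fine. You have also correctly identified the crux: the only iterations that can enlarge the pool are the no-cycle executions of $U_3$, and everything hinges on bounding how many times this can happen.

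That is exactly where your argument has a genuine gap. Your proposed charging scheme --- assigning each no-cycle $U_3$ event to the pair $(s,q_i)$ and ``arguing that such a pair cannot recur without an intervening rule altering the envy-graph structure'' --- is not a proof; you give no reason why a pair could not recur, nor why a recurrence would require such an intervening change, and it is not clear the claim is even true as stated. More importantly, you are overcomplicating the step: you already observed the key fact (the bundle returned to the pool is the old bundle of a $Q$-source agent) but did not draw the right conclusion from it. Since $q_i$ was a source of the envy-graph, \emph{no agent envies the new pool} $B=A_{q_i}$ immediately after the event. From then on, every rule only shrinks $B$ or leaves each agent's own-bundle value non-decreasing, so by monotonicity no agent can ever come to envy the pool again; hence the precondition of $U_3$ (some agent envies $B$) is never satisfied again, and the no-cycle branch of $U_3$ fires \emph{at most once} in the entire run, adding at most $m$ items back to the pool. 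This gives at most $2m$ outer iterations directly, with no $O(n^2)$ term and no charging argument needed. You should replace your fourth paragraph with this observation; the rest of your analysis then goes through and yields the stated bound.
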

\begin{proof}
Checking the preconditions for $U_0,U_1,U_2$, and $U_3$ are $O(1)$, $O(n^2m)$, $O(n^2m)$, and $O(n^2m)$ respectively.
The complexities for the four rules are $O(1)$, $O(1)$, $O(n^2)$, and $O(n^2)$ respectively.
The reconstruction of the envy-graph (after applying one of the update rules) takes $O(n^2)$ time.
For the cycle-elimination step, it takes $O(n^2)$ to find a cycle.
The cycle-elimination step requires at most $O(n^2)$ iterations, as each cycle-elimination eliminates at least one edge and there are at most $O(n^2)$ edges.
Thus, the cycle-elimination step takes $O(n^4)$ time.
The overall time complexity for one while-loop iteration of Algorithm~\ref{alg:partialalg} is $O(n^2\cdot\max(n^2,m))$.

We further claim that the while-loop at Algorithm~\ref{alg:partialalg}, Line~\ref{algline:general_while} is executed for a polynomial number of iterations.
The only case in the while loop that increases the size of the pool is when $U_3$ is applied and no cycle is formed.
The updated pool becomes a bundle from a previous $Q$-source agent, so no one envies the pool immediately after applying this rule. 
This operation may cause $|B|\ge \min\{|P|,|Q|\}$, and the while loop continues. 
However, during the application of the four rules, each agent's valuation to her bundle does not decrease, so no one will envy the pool anymore, and the precondition for $U_3$ will never be satisfied.
Hence, the case may appear only once and increase the size of the pool by at most $m$.

In other cases, each application of the rules will decrease the size of the pool by at least one. Hence, the while-loop will run for at most $2m$ iterations.

Hence, we may conclude that the time complexity of Algorithm~\ref{alg:partialalg} is $O(n^2m\cdot\max(n^2,m))$ which is a polynomial-time algorithm.
\end{proof}

\section{Conclusion and Open Problems}
In this paper, we studied fair division with prioritized agents. In particular, we proposed a new notion \EFP that is stronger than EF1 by allowing the allocation favors a prescribed subset of prioritized agents justified by some factors secondary to fairness. For general valuations, we proposed a polynomial-time algorithm that computes a partial \EFP allocation where the set of unallocated items has small values to all agents and a small cardinality.
We believe the existence and the computational tractability of a complete \EFP allocation is an important open problem.

Other than the settings with infinitely divisible resources (i.e., cake-cutting) and indivisible items, the setting with mixed divisible and indivisible items has received significant attention recently~\citep{bei2021fair,bei2021maximin}. 
Another future direction is to extend our \EFP notion to this setting. 

\section*{Acknowledgments}
The research of Shengxin Liu was partially supported by the National Natural Science Foundation of China (Grant No. 62102117), by the Shenzhen Science and Technology Program (Grant No. RCBS20210609103900003), by the Department of Education of Guangdong Province (Innovative Research Program: 2022KTSCX214), and by the Guangdong Basic and Applied Basic Research Foundation, and was partially sponsored by CCF-Huawei Populus Grove Fund.
The research of Biaoshuai Tao was supported by the National Natural Science Foundation of China (Grant No. 62102252).

\bibliographystyle{plainnat}
\bibliography{reference}

\end{document}